\documentclass{article}
\usepackage{PRIMEarxiv}
\usepackage{lmodern}
\usepackage{tikz}
\usetikzlibrary{calc}

\usepackage{booktabs}
\usepackage{tabularx}
\usepackage{array}
\usepackage{ragged2e}

\usepackage[T1]{fontenc}
\usepackage[utf8]{inputenc}
\usepackage{lmodern}
\usepackage{amsmath,amssymb,amsthm}
\usepackage[hidelinks]{hyperref}
\newcommand{\michael}[1]{#1}

\title{Maximum Solow--Polasky Diversity Subset Selection Is NP-hard Even in the Euclidean Plane}
\author{%
Michael T.~M. Emmerich$^{1}$\thanks{\href{https://orcid.org/0000-0002-7342-2090}{ORCID: 0000-0002-7342-2090}}
\and
Ksenia Pereverdieva$^{2}$\thanks{\href{https://orcid.org/0000-0002-9111-8359}{ORCID: 0000-0002-9111-8359}}
\and
Andr\'e H.~Deutz$^{2}$\thanks{\href{https://orcid.org/0000-0002-9047-6533}{ORCID: 0000-0002-9047-6533}}\\[1ex]
\small $^{1}$Faculty of Information Technology, University of Jyv\"askyl\"a, Finland\\
\small $^{2}$Leiden Center of Advanced Computer Science, Leiden University, The Netherlands
}
\date{}

\newtheorem{theorem}{Theorem}
\newtheorem{lemma}{Lemma}
\newtheorem{proposition}{Proposition}
\newtheorem{definition}{Definition}
\newtheorem{remark}{Remark}
\newtheorem{corollary}{Corollary}

\begin{document}
\maketitle

\begin{abstract}
We prove that, for every fixed $\theta_0>0$, selecting a subset of prescribed cardinality that maximizes the Solow--Polasky diversity indicator is NP-hard for finite point sets in $\mathbb{R}^2$ with the Euclidean metric, and therefore also for finite point sets in $\mathbb{R}^d$ for every fixed dimension $d\ge 2$. This strictly strengthens our earlier NP-hardness result for general metric spaces by showing that hardness persists under the severe geometric restriction to the Euclidean plane. At the same time, the Euclidean proof technique is different from the conceptually easier earlier argument for arbitrary metric spaces, and that general metric-space construction does not directly translate to the Euclidean setting. In the earlier proof one can use an exact construction tailored to arbitrary metrics, essentially exploiting a two-distance structure. In contrast, such an exact realization is unavailable in fixed-dimensional Euclidean space, so the present reduction requires a genuinely geometric argument. Our Euclidean proof is based on two distance thresholds, which allow us to separate yes-instances from no-instances by robust inequalities rather than by the exact construction used in the general metric setting. The main technical ingredient is a bounded-box comparison lemma for the nonlinear objective $\mathbf{1}^{\top}Z^{-1}\mathbf{1}$, where $Z_{ij}=e^{-\theta_0 d(x_i,x_j)}$. This lemma controls the effect of perturbations in the pairwise distances well enough to transfer the gap created by the reduction. The reduction is from \emph{Geometric Unit-Disk Independent Set}. We present the main argument in geometric form for finite subsets of $\mathbb{R}^2$, with an appendix supplying the bit-complexity details needed for polynomial-time reducibility. The result thus shows that the difficulty of optimizing this diversity measure is not merely an artifact of unrestricted metrics, but remains present already in a structured geometric environment.
\end{abstract}

\section{Introduction}
Diversity measures based on pairwise distances play an important role in
ecology, conservation biology, and optimization, where they are used to
quantify how well a set of objects, species, or candidate solutions
covers the available variation. One influential example is the
Solow--Polasky diversity indicator, introduced by Solow and
Polasky~\cite{SolowPolasky1994}. In the axiomatic framework of
Leinster~\cite{Leinster2021} and Leinster and
Meckes~\cite{LeinsterMeckes2016}, closely related quantities arise under
the name of magnitude; see also Huntsman~\cite{Huntsman2023} for a
recent optimization-oriented discussion. In indicator-based diversity
optimization, Solow--Polasky diversity has been studied by Ulrich,
Bader, and Thiele~\cite{UlrichBaderThiele2010} and in Ulrich's thesis
work~\cite{Ulrich2012}. The subset-selection viewpoint for such
indicators was formulated explicitly by Ulrich~\cite{Ulrich2012} and
was revisited in engineering design settings in the recent comparative
study of Pereverdieva et al.~\cite{Pereverdieva2025}. Related
portfolio-selection ideas balancing quality and Solow--Polasky diversity were also
considered by Yevseyeva et al.~\cite{Yevseyeva2019} in the context of
drug discovery.

For a finite set $S=\{x_1,\dots,x_k\}$ in a metric space and a fixed
parameter $\theta_0>0$, the Solow--Polasky diversity indicator is
\[
SP_{\theta_0}(S)=\mathbf{1}^{\top}Z^{-1}\mathbf{1},
\qquad
Z_{ij}=e^{-\theta_0 d(x_i,x_j)},
\]
whenever the similarity matrix $Z$ is invertible. Thus, larger distances
correspond to smaller off-diagonal similarity entries. In particular,
making points (or species) more separated decreases the off-diagonal
entries of $Z$, which in the regime studied below increases the
Solow--Polasky value.

A recent companion preprint by the same authors establishes that
selecting a maximum Solow--Polasky diversity subset is NP-hard in
general metric spaces~\cite{EmmerichPD26}. That reduction is from
Independent Set and uses a highly structured metric with only two
non-zero distance values, together with matrix entries scaled by the
input instance. Although conceptually simpler than the present proof,
the argument in~\cite{EmmerichPD26} is not a routine
Independent-Set-to-diversity reduction. The reason is that the
Solow--Polasky objective depends nonlinearly on the distance structure
through the matrix inverse $Z^{-1}$. The key idea in
\cite{EmmerichPD26} is to control this nonlinearity by means of a
Neumann-series expansion combined with a monotonicity argument.

The present paper builds on the same underlying perspective, but shows
that maximum Solow--Polasky diversity subset selection remains NP-hard
even in the Euclidean plane. Thus, the earlier general-metric result can
be tightened substantially: NP-hardness persists even under the severe
geometric restriction to finite point sets in $\mathbb{R}^2$. At the
same time, this strengthening does not follow by a direct adaptation of
the metric-space construction. In fixed dimension, one cannot expect an
exact realization of the two-distance metric used in
\cite{EmmerichPD26}. The main new ingredient is therefore a robust
comparison argument that replaces the exact two-distance construction by
a genuinely geometric one in $\mathbb{R}^2$. For clarity, we present the
core argument in geometric form for finite subsets of $\mathbb{R}^2$;
the additional bit-complexity discussion needed to realize the
construction on rational point sets in $\mathbb{Q}^2$ and hence obtain a
polynomial-time reduction is provided separately in an appendix.

The geometric source problem is the standard point-set formulation of
independent set in unit-disk graphs.

\begin{definition}[Geometric unit-disk independent set]
Given a finite point set $P\subseteq \mathbb{R}^2$ and an integer $k$,
determine whether there exists a subset $S\subseteq P$ with $|S|=k$ such
that
\[
\|p-q\|_2>1
\qquad
\text{for all distinct } p,q\in S.
\]
Equivalently, if $G(P)$ denotes the graph on vertex set $P$ in which two
points are adjacent iff their Euclidean distance is at most $1$, then the
question is whether $G(P)$ contains an independent set of size $k$.
\end{definition}

\begin{proposition}[Known source problem]
\label{prop:known-source}
Geometric unit-disk independent set is NP-hard.
\end{proposition}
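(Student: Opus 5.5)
This is the classical result of Clark, Colbourn and Johnson (1990), who show that Maximum Independent Set remains NP-hard on unit-disk graphs even when the geometric representation is given. The plan is to recall their reduction and check that it delivers the point-set formulation with the strict inequality $\|p-q\|_2>1$ used in the definition.

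The reduction starts from Independent Set in planar graphs of maximum degree $3$, which is NP-hard by Garey and Johnson. Given such a graph $H$ and an integer $k'$, I would:
\begin{itemize}
\item compute a planar orthogonal grid embedding of $H$ in polynomial time, with vertices at integer grid points and edges as rectilinear paths;
\item scale the embedding by a constant factor so that distinct edge paths are far apart;
\item subdivide every edge path of $H$ into an \emph{even} number $2m_e$ of additional points, placed along the path at spacing slightly less than the unit threshold, so that consecutive points are adjacent in $G(P)$ and non-consecutive points are not.
\end{itemize}
Each subdivision of an edge by two new vertices raises the maximum independent set size by exactly one. Hence $H$ has an independent set of size $k'$ iff $G(P)$ has one of size $k'+\sum_e m_e$. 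This is the standard chain-replacement argument: in an even path inserted between $u$ and $v$, one can always take $m_e$ interior points, plus one more exactly when $u$ and $v$ are not both chosen.

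The main obstacle is geometric. Near bends and near vertices of degree $3$, points on different paths, or non-consecutive points on the same path, must be at distance $>1$. Consecutive points must be at distance $\le 1$, which matches the closed-disk convention in the definition. I would handle this by scaling the grid by a sufficiently large constant, say $8$, and placing points at spacing $1$ along straight segments. At corners I would use a slightly shortened spacing so that the diagonal neighbour across the corner is at distance $>1$. Parity of each chain would be fixed by locally adjusting the spacing on one segment.

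All coordinates can be taken rational with polynomially bounded bit length, so the reduction runs in polynomial time. In the paper itself I would simply cite Clark--Colbourn--Johnson, together with the observation that their construction supplies explicit coordinates. That observation is exactly what the geometric formulation of the problem requires.
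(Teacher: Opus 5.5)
Your proposal is correct and takes the same route as the paper, which gives no proof of its own and simply cites Clark, Colbourn, and Johnson. Their reduction, from planar maximum-degree-$3$ Independent Set via a grid embedding and even edge subdivisions, outputs explicit coordinates, which is exactly what the point-set formulation needs. Two small slips in your sketch do not affect this. First, a chain of $2m_e$ interior points admits $m_e$ independent interior points when $u,v$ are not both chosen and only $m_e-1$ when both are (not ``$m_e$ plus one more''); this is what gives the equivalence with $k'+\sum_e m_e$. Second, no shortening is needed at bends: with any spacing $s\in(1/\sqrt{2},1]$, two points two steps apart around a right-angle bend are already at distance at least $s\sqrt{2}>1$, and shortening $s$ only decreases that distance.
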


\begin{remark}
This point-set formulation is standard in the unit-disk-graph literature.
Clark, Colbourn, and Johnson~\cite{ClarkColbournJohnson1990} introduced the
proximity-model view of unit-disk graphs, and later computational-geometry
papers routinely formulate maximum independent set directly on point sets
$P\subseteq \mathbb{R}^2$ via the graph $G(P)$
\cite{DasFonsecaJallu2020,TkachenkoWang2025}.
\end{remark}

Our main result is the following.

\begin{theorem}
\label{thm:main}
For every fixed constant $\theta_0>0$, the problem of selecting a
subset of prescribed cardinality maximizing the Solow--Polasky
diversity indicator is NP-hard even for finite subsets of
$\mathbb{R}^2$ under the Euclidean metric.
\end{theorem}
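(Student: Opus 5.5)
We reduce from geometric unit-disk independent set (Proposition~\ref{prop:known-source}). Given $(P,k)$, we rescale the point set by a factor $L>0$, chosen in polynomial size depending on $\theta_0$, $k$, and $|P|$. We then ask for a $k$-subset of $LP$ maximizing $SP_{\theta_0}$, and compare its value to a threshold $\tau$. The idea is that after scaling, every pair of points at original distance $>1$ ends up at distance $>L$, so its similarity entry is at most $\varepsilon:=e^{-\theta_0 L}$. Pairs at original distance $\le 1$ end up at distance $\le L$, so their entry is at least $\varepsilon$. To obtain a genuine gap we use two thresholds. Since $P$ is finite, there is $\delta>0$, computable with polynomially many bits, such that no pairwise distance lies in $(1,1+\delta]$. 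Therefore independent pairs have scaled distance $\ge L(1+\delta)$, which gives entries $\le \varepsilon_1:=e^{-\theta_0 L(1+\delta)}$, while conflicting pairs have entries $\ge \varepsilon_0:=e^{-\theta_0 L}$. We choose $L$ so that $k\varepsilon_0$ is small, for instance $k\varepsilon_0\le 1/4$. Then every $Z$ arising from a $k$-subset is strictly diagonally dominant, hence invertible, and its inverse has a convergent Neumann series.

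The core is a bounded-box comparison lemma. Write $Z=I+E$ with $E$ symmetric, zero diagonal, and entries in $[0,\eta]$, where $k\eta<1$. Expanding the Neumann series gives
\[
\mathbf 1^\top Z^{-1}\mathbf 1=k-\mathbf 1^\top E\mathbf 1+\mathbf 1^\top E^2\mathbf 1-\cdots .
\]
Hence
\[
\bigl|\mathbf 1^\top Z^{-1}\mathbf 1-(k-\mathbf 1^\top E\mathbf 1)\bigr|\le \frac{k(k\eta)^2}{1-k\eta}.
\]
This follows from $\|E\|_2\le k\eta$ and $\|\mathbf 1\|^2=k$. The lemma therefore says that, up to a quadratic error $O(k^3\eta^2)$, the objective equals $k$ minus the sum of the off-diagonal similarities.

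Now compare the two cases. In a yes-instance, an independent $k$-set has all off-diagonal entries at most $\varepsilon_1$, so its value is at least $k-k^2\varepsilon_1-O(k^3\varepsilon_1^2)$. In a no-instance, every $k$-subset contains at least one conflicting pair, which contributes an entry $\ge\varepsilon_0$ twice in $\mathbf 1^\top E\mathbf 1$, and all other entries are nonnegative. Its value is therefore at most $k-2\varepsilon_0+O(k^3\varepsilon_0^2)$.

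It remains to choose $L$ so that
\[
k^2\varepsilon_1+Ck^3\varepsilon_1^2 \;<\; 2\varepsilon_0-Ck^3\varepsilon_0^2 .
\]
Since $\varepsilon_1/\varepsilon_0=e^{-\theta_0L\delta}$, it suffices to have $e^{-\theta_0L\delta}\le 1/k^2$ and $k^3\varepsilon_0\ll 1$. This holds for
\[
L\ \ge\ \max\!\left\{\frac{2\ln k}{\theta_0\delta},\ \frac{c\ln k}{\theta_0}\right\},
\]
which is polynomially bounded because $\log(1/\delta)$ is polynomial in the input size. Setting $\tau$ to be any rational strictly between the two bounds, the maximum is $\ge\tau$ if and only if an independent $k$-set exists. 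So any algorithm computing a maximizer decides geometric unit-disk independent set.

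The main obstacle is the one the abstract flags: the gap $\varepsilon_0-\varepsilon_1$ must dominate the second-order Neumann terms uniformly over all subsets. This is exactly what the bounded-box lemma handles, together with the choice of separation $\delta$. The remaining work is the bit-complexity step. We need rational points $LP\subseteq\mathbb Q^2$ and a rational $\tau$, even though the exponentials are irrational. I would deal with this, as the paper does in its appendix, by working with the decision version with threshold $\tau$ and arguing that the gap is of inverse-polynomial-exponential size, so polynomially many bits suffice to separate.
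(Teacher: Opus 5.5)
There is a genuine gap in how you handle non-independent subsets. You choose $L$ with $k\varepsilon_0\le 1/4$, where $\varepsilon_0=e^{-\theta_0L}$, and conclude that every $Z$ coming from a $k$-subset is strictly diagonally dominant. But $\varepsilon_0$ is only a \emph{lower} bound on the entries of conflicting pairs. A pair at original distance $d\le 1$ has entry $e^{-\theta_0Ld}$, and your choice $L\ge\max\{2\ln k/(\theta_0\delta),\,c\ln k/\theta_0\}$ does nothing to control small $d$. Suppose $P$ contains two points at distance $10^{-9}$ while all other distances are far from $1$, so $\delta$ is large. Then $L=O(\ln k/\theta_0)$, that entry is essentially $1$, and the Neumann series and your expansion are unavailable.

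Even when the series converges, your bound for a bad subset $T$ is wrong as stated. The quadratic error is $k(k\eta_T)^2/(1-k\eta_T)$, where $\eta_T\ge\varepsilon_0$ is the largest entry of \emph{that} subset, not $O(k^3\varepsilon_0^2)$. And $\eta_T$ may be exponentially larger than $\varepsilon_0$. Taking $\delta$ to be a bit-length bound that also lower-bounds the minimum distance does not rescue this. The condition $e^{-\theta_0L\delta}\le 1/k^2$ only caps entries at $1/k^2$, where your error term is about $1/(k-1)$, far above the gap of order $\varepsilon_0$.

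The missing ingredient is the paper's second margin: the minimum pairwise distance $d_{\min}>0$ of $P$ (Lemma~\ref{lem:finite-gap}(1); $d_{\min}\ge 2^{-4B}$ for rational input), with $L$ also chosen so that $e^{-\theta_0Ld_{\min}}$ is uniformly small. With $e^{-\theta_0Ld_{\min}}\le 1/(2k^3)$ your route can be repaired. A bad subset has value at most $h(\eta_T):=k-2\eta_T+k^3\eta_T^2/(1-k\eta_T)$. One checks that $h$ is decreasing on $[\varepsilon_0,1/(2k^3)]$, so the value is at most $h(\varepsilon_0)$, which is the bound you wrote.

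The paper avoids this bookkeeping by using exact monotonicity instead of a first-order approximation. Once all entries lie in $[0,\rho]$ with $\rho<1/(4k)$, one has $w=Z^{-1}\mathbf{1}>0$ and $\partial F/\partial z_{ab}=-2w_aw_b<0$. So a bad subset is dominated by the matrix with a single entry equal to $q=e^{-\theta_0L}$ and all others $0$, giving $F\le k-2q/(1+q)$. A good subset gives $F\ge k/(1+(k-1)r)$. This needs only $e^{-\theta_0Ld_{\min}}\le 1/(4k)$ and involves no error terms at all.
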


The remainder of the paper is organized as follows. In
Section~2 we establish a bounded-box comparison lemma for the
Solow--Polasky objective, which provides the monotonicity property on
which the reduction rests. Section~3 records a simple geometric margin
observation for finite point sets in the plane. In Section~4 these
ingredients are combined to prove the main NP-hardness result. The
paper concludes with a brief discussion of the scope of the argument.
The appendices contain complementary material. Appendix~A supplies the
bit-complexity analysis needed to realize the construction on rational
point sets in $\mathbb{Q}^2$ and thereby obtain a polynomial-time
reduction. Appendix~B presents a few small numerical sanity checks that
illustrate the separation mechanism underlying the proof.

\section{A bounded-box comparison lemma}

The key analytic input is that, once all off-diagonal similarities are
sufficiently small, the Solow--Polasky objective decreases whenever one
of those similarities increases. Since each off-diagonal entry has the
form $e^{-\theta_0 d(x_i,x_j)}$, this means equivalently that, in this
small-similarity regime, increasing a pairwise distance increases the
objective value.

\begin{lemma}[Bounded-box comparison]
\label{lem:box}
Fix $k\ge 2$ and let $0<\rho<1/(4k)$. Let $Z$ be a symmetric
$k\times k$ matrix with diagonal entries $1$ and all off-diagonal
entries in $[0,\rho]$. Define
\[
F(Z):=\mathbf{1}^{\top}Z^{-1}\mathbf{1}.
\]
Then:
\begin{enumerate}
\item $Z$ is invertible and the vector $w:=Z^{-1}\mathbf{1}$ has
strictly positive components;
\item $F(Z)$ is strictly decreasing in each off-diagonal entry of $Z$.
\end{enumerate}
Consequently, for any $0\le r\le q\le \rho$:
\begin{enumerate}
\item if all off-diagonal entries of $Z$ are at most $r$, then
\[
F(Z)\ge \frac{k}{1+(k-1)r};
\]
\item if at least one off-diagonal entry is at least $q$, then
\[
F(Z)\le k-\frac{2q}{1+q}.
\]
\end{enumerate}
\end{lemma}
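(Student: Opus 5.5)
The plan is to write $Z=I+E$, where $E$ is symmetric with zero diagonal and off-diagonal entries in $[0,\rho]$, and to control everything through a Neumann series. Since each row of $E$ has at most $k-1$ nonzero entries, each at most $\rho$, we have $\|E\|_\infty\le (k-1)\rho<1/4$. Hence $Z$ is invertible and $Z^{-1}=\sum_{m\ge 0}(-E)^m$.

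For positivity I would bound $w-\mathbf{1}=\sum_{m\ge1}(-E)^m\mathbf{1}$ in the sup-norm:
\[
\|w-\mathbf{1}\|_\infty\le \frac{\|E\|_\infty}{1-\|E\|_\infty}<\frac{1/4}{3/4}=\frac13 .
\]
This gives $w_i>2/3>0$ for every $i$, which proves item~1. The bound only uses the box constraint, so it holds uniformly for every matrix in the box. I expect this uniformity to be the main (though mild) point to check, because item~2 needs positivity along entire paths inside the box and not just at $Z$.

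For item~2 I would treat $F$ as a function of the $\binom{k}{2}$ free off-diagonal parameters $z_{ij}=z_{ji}$. Increasing $z_{ij}$ by $t$ perturbs $Z$ by $t(e_ie_j^\top+e_je_i^\top)$. The identity $\frac{d}{dt}Z^{-1}=-Z^{-1}\dot Z Z^{-1}$ together with symmetry of $Z^{-1}$ gives
\[
\frac{\partial F}{\partial z_{ij}}=-\mathbf{1}^\top Z^{-1}(e_ie_j^\top+e_je_i^\top)Z^{-1}\mathbf{1}=-2w_iw_j .
\]
This is strictly negative throughout the box by item~1. Since the box $[0,\rho]^{\binom k2}$ is convex and contains every segment along which a single coordinate is varied, $F$ is strictly decreasing in each off-diagonal entry on the whole box. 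It is also monotone nonincreasing in the componentwise order: move coordinates one at a time, and every intermediate matrix stays in the box.

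The two consequences then follow by comparison with explicit extremal matrices. For the lower bound, if all off-diagonal entries are at most $r$, then $Z\le Z_r$ componentwise, where $Z_r$ has all off-diagonal entries equal to $r$. Hence $F(Z)\ge F(Z_r)$. Since $Z_r\mathbf{1}=(1+(k-1)r)\mathbf{1}$, we get $F(Z_r)=k/(1+(k-1)r)$. For the upper bound, suppose $z_{ij}\ge q$. Let $Z'$ have $z'_{ij}=z'_{ji}=q$ and all other off-diagonal entries $0$. Then $Z\ge Z'$ componentwise, so $F(Z)\le F(Z')$. The matrix $Z'$ is block diagonal with the $2\times2$ block $\begin{pmatrix}1&q\\ q&1\end{pmatrix}$, which contributes $2/(1+q)$, and $k-2$ identity entries. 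Therefore
\[
F(Z')=(k-2)+\frac{2}{1+q}=k-\frac{2q}{1+q}.
\]
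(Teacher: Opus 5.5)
Your proposal is correct and matches the paper's proof: the Neumann series with $\|E\|_\infty<1/4$ giving $w_i>2/3$, the derivative $\partial F/\partial z_{ij}=-2w_iw_j<0$, and comparison with the all-$r$ matrix and with the single $2\times 2$ block containing $q$. Your explicit points that positivity holds uniformly on the whole box, and that one-coordinate-at-a-time paths stay inside it, make precise the monotonicity-to-extremal step that the paper leaves implicit.
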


\begin{proof}
We split the proof into three steps.

\smallskip
\noindent
\emph{Step 1: invertibility and positivity.}
Write $Z=I+B$, where $B$ has zero diagonal and off-diagonal entries in
$[0,\rho]$. Every row sum of $|B|$ is at most $(k-1)\rho<1/4$, so
\[
\|B\|_\infty<1.
\]
Hence the Neumann series converges:
\[
Z^{-1}=(I+B)^{-1}=\sum_{m=0}^{\infty}(-B)^m.
\]
Applying this to $\mathbf{1}$ gives
\[
w:=Z^{-1}\mathbf{1}=\sum_{m=0}^{\infty}(-B)^m\mathbf{1}.
\]
For each component,
\[
w_i\ge 1-\sum_{m=1}^{\infty}\|B\|_\infty^m
>1-\sum_{m=1}^{\infty}(1/4)^m
=\frac23>0.
\]
So $Z$ is invertible and all components of $w$ are strictly positive.

\smallskip
\noindent
\emph{Step 2: monotonicity.}
Fix one off-diagonal variable $z_{ab}=z_{ba}$ with $a\neq b$. Using
\[
\frac{\partial Z^{-1}}{\partial z_{ab}}
=
-\,Z^{-1}\bigl(E_{ab}+E_{ba}\bigr)Z^{-1},
\]
we obtain
\[
\frac{\partial F}{\partial z_{ab}}
=
\mathbf{1}^{\top}\frac{\partial Z^{-1}}{\partial z_{ab}}\mathbf{1}
=
-2w_aw_b<0.
\]
Thus $F(Z)$ is strictly decreasing in each off-diagonal entry.

\smallskip
\noindent
\emph{Step 3: extremal configurations.}
If all off-diagonal entries are at most $r$, then by monotonicity
$F(Z)$ is minimized when all of them are as large as possible, namely
when
\[
Z=(1-r)I+rJ.
\]
A direct computation gives
\[
F(Z)=\frac{k}{1+(k-1)r}.
\]

If at least one off-diagonal entry is at least $q$, then by the same
monotonicity $F(Z)$ is maximized when exactly one symmetric pair equals
$q$ and all other off-diagonal entries are $0$. In that case $Z$
consists of one $2\times 2$ block
\[
\begin{pmatrix}
1 & q\\
q & 1
\end{pmatrix}
\]
and $k-2$ singleton blocks. Therefore
\[
F(Z)=(k-2)+\frac{2}{1+q}
=
k-\frac{2q}{1+q}.
\]
This proves the two bounds.
\end{proof}

\begin{corollary}[Strict separation]
\label{cor:gap}
Under the assumptions of Lemma~\ref{lem:box}, every matrix whose
off-diagonal entries are all at most $r$ has strictly larger objective
value than every matrix with at least one off-diagonal entry at least
$q$, provided
\[
q>k(k-1)r.
\]
\end{corollary}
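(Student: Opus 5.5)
We derive this directly from the two consequences of Lemma~\ref{lem:box}. If all off-diagonal entries are at most $r$, the lemma gives $F(Z)\ge k/(1+(k-1)r)$. If at least one off-diagonal entry is at least $q$, it gives $F(Z)\le k-2q/(1+q)$. Both bounds hold because $0\le r\le q\le\rho$, which is part of the setting of the lemma. It is therefore enough to show
\[
\frac{k}{1+(k-1)r} > k-\frac{2q}{1+q}
\quad\text{whenever } q>k(k-1)r .
\]

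To prove this inequality, we first bound the left side below using $1/(1+x)\ge 1-x$ for $x\ge0$:
\[
\frac{k}{1+(k-1)r}\ge k-k(k-1)r .
\]
We then bound the right side above. Since $q\le\rho<1/(4k)\le 1/8$, we have $1+q<2$, so $2q/(1+q)>q$ whenever $q>0$. (Here $q>0$ holds because $q>k(k-1)r\ge 0$.) This gives
\[
k-\frac{2q}{1+q}<k-q .
\]
Chaining the two estimates reduces the claim to $k-k(k-1)r\ge k-q$, that is, $q\ge k(k-1)r$. This follows from the hypothesis, which is even strict, so the final comparison is strict.

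There is no real obstacle. The only care needed is to keep the strict inequality, which comes from $2q/(1+q)>q$ for $0<q<1$. One should also note that $r\le q$ follows automatically from $q>k(k-1)r$ with $k\ge2$, so both parts of the lemma apply with the same $\rho$.
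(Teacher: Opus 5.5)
Your proof is correct and follows the same route as the paper. Both invoke the two bounds of Lemma~\ref{lem:box} and then verify the scalar inequality $\frac{k}{1+(k-1)r}>k-\frac{2q}{1+q}$ using $q<1$; the only difference is the algebra, since you bound each side linearly via $1/(1+x)\ge 1-x$ and $2q/(1+q)>q$, whereas the paper cross-multiplies to $2q>(k-1)r\bigl(k+(k-2)q\bigr)$ and then uses $k+(k-2)q<2k$, both reaching the same sufficient condition $q>k(k-1)r$.
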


\begin{proof}
By Lemma~\ref{lem:box}, it is enough to prove that
\[
\frac{k}{1+(k-1)r}
>
k-\frac{2q}{1+q}.
\]
We now transform this inequality step by step.

First rewrite the right-hand side over the denominator $1+q$:
\[
k-\frac{2q}{1+q}
=
\frac{k(1+q)-2q}{1+q}
=
\frac{k+(k-2)q}{1+q}.
\]
Hence the desired inequality is equivalent to
\[
\frac{k}{1+(k-1)r}
>
\frac{k+(k-2)q}{1+q}.
\]

Since both denominators are positive, we may cross-multiply:
\[
k(1+q)
>
\bigl(1+(k-1)r\bigr)\bigl(k+(k-2)q\bigr).
\]
Expanding the right-hand side gives
\[
k+kq
>
k+(k-2)q+(k-1)r\bigl(k+(k-2)q\bigr).
\]
Subtracting $k$ from both sides yields
\[
kq
>
(k-2)q+(k-1)r\bigl(k+(k-2)q\bigr).
\]
Now subtract $(k-2)q$ from both sides:
\[
2q
>
(k-1)r\bigl(k+(k-2)q\bigr).
\]
So it remains to verify this last inequality.

Because $q\le \rho<1/(4k)<1$, we have
\[
k+(k-2)q < k+(k-2)=2k-2<2k.
\]
Therefore
\[
(k-1)r\bigl(k+(k-2)q\bigr)
<
2k(k-1)r.
\]
Thus a sufficient condition for
\[
2q>(k-1)r\bigl(k+(k-2)q\bigr)
\]
is
\[
2q>2k(k-1)r,
\]
that is,
\[
q>k(k-1)r.
\]
This is exactly the assumed hypothesis. Hence
\[
\frac{k}{1+(k-1)r}
>
k-\frac{2q}{1+q},
\]
and the strict separation follows.
\end{proof}

\section{A geometric margin for finite point sets}

The next observation is purely geometric: for a finite point set,
there is always a positive gap both below the smallest nonzero distance
and above the threshold $1$.

\begin{lemma}[Finite-set margin]
\label{lem:finite-gap}
Let $P\subseteq \mathbb{R}^2$ be finite. Then there exist constants
$\delta>0$ and $\eta>0$ such that:
\begin{enumerate}
\item for all distinct $p,q\in P$, one has
\[
\|p-q\|_2\ge \delta;
\]
\item whenever $\|p-q\|_2>1$, one in fact has
\[
\|p-q\|_2\ge 1+\eta.
\]
\end{enumerate}
\end{lemma}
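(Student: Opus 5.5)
The idea is to exploit finiteness directly: the set of pairwise distances
\[
D:=\{\|p-q\|_2 : p,q\in P,\ p\neq q\}
\]
is a finite set of positive reals, so minima over suitable subsets of $D$ exist and are attained. We treat the two claims separately and handle empty subsets by a default choice.

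For part~1, if $|P|\le 1$ the condition is vacuous, and we set $\delta:=1$. Otherwise $D$ is nonempty and finite, and every element is strictly positive because $p\neq q$ implies $\|p-q\|_2>0$. We then set $\delta:=\min D>0$, and every pair of distinct points satisfies $\|p-q\|_2\ge\delta$ by construction.

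For part~2, consider the subset $D_{>1}:=\{d\in D: d>1\}$. If it is empty, the implication is vacuous and any $\eta>0$ works, for instance $\eta:=1$. Otherwise $D_{>1}$ is finite and nonempty, so $m:=\min D_{>1}$ exists and satisfies $m>1$. We set $\eta:=m-1>0$. Then whenever $\|p-q\|_2>1$, the distance lies in $D_{>1}$ (note that $p\neq q$ automatically), so $\|p-q\|_2\ge m=1+\eta$.

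There is essentially no obstacle here. The only point needing care is that the minimum over a finite set of reals that are all strictly greater than $1$ (respectively $0$) is itself strictly greater than $1$ (respectively $0$). This is exactly the step that fails for infinite sets and explains the finiteness hypothesis. For later use, for example in the bit-complexity appendix, one might also note that $\delta$ and $\eta$ are computable from $P$, since $1+\eta$ is the least squared-distance-compatible value exceeding $1$. However, the lemma as stated only requires existence.
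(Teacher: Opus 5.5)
Your proof is correct and follows essentially the same route as the paper: you take $\delta$ as the minimum of the finite set of pairwise distances, and $\eta$ as the minimum excess over $1$ among distances exceeding $1$, with the default $\eta=1$ when no such distance exists. You are in fact slightly more careful than the paper, which sets $\delta:=\min D$ without handling the case $|P|\le 1$, where $D$ is empty.
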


\begin{proof}
Because $P$ is finite, the set
\[
D:=\{\|p-q\|_2 : p,q\in P,\ p\neq q\}
\]
is finite and contained in $(0,\infty)$. Hence
\[
\delta:=\min D>0.
\]

Likewise, the set
\[
E:=\{\|p-q\|_2-1 : p,q\in P,\ \|p-q\|_2>1\}
\]
is finite and contained in $(0,\infty)$ whenever it is nonempty. If
$E\neq\emptyset$, define $\eta:=\min E$; otherwise set $\eta:=1$.
Then $\eta>0$, and every pair with distance greater than $1$ has
distance at least $1+\eta$.
\end{proof}

\section{Reduction to the Euclidean plane}

We now present the geometric core of the reduction. The proof in this
section only uses that the source point set is finite. The fact that the
construction can be carried out in polynomial time from a rationally
encoded input is deferred to Appendix~\ref{app:rational}.

\begin{proof}[Proof of Theorem~\ref{thm:main}]
Let $(P,k)$ be an instance of geometric unit-disk independent set, where
\[
P=\{p_1,\dots,p_n\}\subseteq \mathbb{R}^2.
\]
We may assume $k\ge 2$, since the cases $k=0,1$ are trivial.

By Lemma~\ref{lem:finite-gap}, there exist $\delta>0$ and $\eta>0$ such
that:
\begin{itemize}
\item every distinct pair in $P$ is at distance at least $\delta$;
\item every pair in $P$ whose distance is greater than $1$ is in fact at
distance at least $1+\eta$.
\end{itemize}

Choose a scale factor $L>0$ so large that
\[
e^{-\theta_0L\delta}\le \frac1{4k}
\qquad\text{and}\qquad
e^{-\theta_0L\eta}<\frac1{k(k-1)}.
\]
To ensure the above equation
it is enough to impose corresponding lower bounds on \(L\).
Indeed,
\[
e^{-\theta_0L\delta}\le \frac{1}{4k}
\iff
-\theta_0L\delta \le \ln\!\left(\frac{1}{4k}\right)
=
-\ln(4k)
\iff
L\ge \frac{\ln(4k)}{\theta_0\delta}.
\]
Similarly,
\[
e^{-\theta_0L\eta}< \frac{1}{k(k-1)}
\iff
-\theta_0L\eta < \ln\!\left(\frac{1}{k(k-1)}\right)
=
-\ln\!\bigl(k(k-1)\bigr)
\iff
L> \frac{\ln\!\bigl(k(k-1)\bigr)}{\theta_0\eta}.
\]
Therefore, it suffices to choose
\[
L>\max\left\{
\frac{\ln(4k)}{\theta_0\delta},
\frac{\ln\!\bigl(k(k-1)\bigr)}{\theta_0\eta}
\right\}.
\]

Now scale the point set by $L$:
\[
x_i:=Lp_i\in\mathbb{R}^2,
\qquad
X:=\{x_1,\dots,x_n\}.
\]
For any subset $U\subseteq P$, write
\[
L U:=\{Lp : p\in U\}\subseteq X.
\]

We compare two kinds of $k$-subsets of $P$.

\smallskip
\noindent
\emph{Reminder on similarities.}
Recall that larger pairwise distances produce smaller off-diagonal
entries in the similarity matrix $Z$, because the similarity is
$e^{-\theta_0 d}$. Thus independent subsets, whose mutual distances are
all larger than $1$, are expected to have smaller similarities and
hence larger Solow--Polasky values.

\smallskip
\noindent
\emph{Good subsets.}
Call a $k$-subset $S\subseteq P$ \emph{good} if it is independent, i.e.,
if all pairwise distances in $S$ are greater than $1$.

Then every pairwise distance in $S$ is at least $1+\eta$, so after
scaling every pairwise distance in $LS$ is at least $L(1+\eta)$.
Therefore every off-diagonal similarity in the similarity matrix of $LS$
is at most
\[
r:=e^{-\theta_0L(1+\eta)}.
\]
By Lemma~\ref{lem:box},
\[
SP_{\theta_0}(LS)\ge \frac{k}{1+(k-1)r}.
\]

\smallskip
\noindent
\emph{Bad subsets.}
Call a $k$-subset $T\subseteq P$ \emph{bad} if it is not independent.
Then some pair in $T$ has distance at most $1$. After scaling, the
corresponding pair in $LT$ has distance at most $L$, so the corresponding
similarity entry is at least
\[
q:=e^{-\theta_0L}.
\]
Again by Lemma~\ref{lem:box},
\[
SP_{\theta_0}(LT)\le k-\frac{2q}{1+q}.
\]

\smallskip
\noindent
\emph{Why Lemma~\ref{lem:box} applies.}
Every distinct pair in $P$ is at distance at least $\delta$, hence every
distinct pair in $X$ is at distance at least $L\delta$. So every
off-diagonal similarity in every $k$-subset of $X$ is at most
\[
\rho:=e^{-\theta_0L\delta}\le \frac1{4k}.
\]
Thus all hypotheses of Lemma~\ref{lem:box} are satisfied.

\smallskip
\noindent
\emph{Strict separation.}
Finally,
\[
\frac{q}{r}=e^{\theta_0L\eta}>k(k-1),
\]
so
\[
q>k(k-1)r.
\]
By Corollary~\ref{cor:gap}, every good $k$-subset has strictly larger
Solow--Polasky value than every bad $k$-subset.

Therefore the maximizing $k$-subsets of $X$ are exactly the sets $LS$
where $S\subseteq P$ is an independent $k$-subset. Hence an algorithm
for Solow--Polasky subset selection in the Euclidean plane would solve
geometric unit-disk independent set.

To complete the NP-hardness proof, one must verify that the scaling
factor $L$ can be chosen with polynomial encoding length when the input
point set is given by rational coordinates. This is done in
Appendix~\ref{app:rational}. Therefore the reduction is computable in
polynomial time. By Proposition~\ref{prop:known-source},
Solow--Polasky subset selection in $\mathbb{R}^2$ is NP-hard.
\end{proof}
\section{Related Work}
A related precursor is the work of Leinster and
Meckes~\cite{LeinsterMeckes2016}, who study diversity maximization over
probability distributions on a fixed symmetric similarity matrix $Z$.
Their framework is broader in one direction than ours, since it allows
arbitrary similarity matrices, and they show in particular that for the
adjacency matrix of a finite reflexive graph the maximum diversity is the independence number. Thus, they already proved NP-hardness
of diversity maximization in a broader framework. However, the
corresponding complexity question for Solow--Polasky kernels induced by metric spaces and
Euclidean metrics in fixed dimension (already for $\mathbb{R}^2$)
remained open. They also relate maximum diversity to magnitude through
the theory of weightings, showing that under suitable
hypotheses - for instance, for positive semidefinite matrices with a
nonnegative weighting, including ultrametric matrices -the maximum
diversity can be computed exactly in polynomial time.
Thus, while the results of Leinster and Meckes clearly anticipate the possibility of hardness phenomena through the connection with independence number, they do not subsume the general-metric or Euclidean subset-selection results established here.

\section{Concluding remarks}

The argument given above shows that the Euclidean-plane case does not
depend on an exact two-distance realization of the type used in the
general-metric setting. What is essential is, first, a regime in which
all off-diagonal similarities are uniformly small, so that the
Solow--Polasky objective is monotone in each similarity entry, and,
second, a positive separation between the similarities associated with
independent pairs and those associated with adjacent pairs. Once these
two structural properties have been established, the reduction from
geometric unit-disk independent set follows.

This clarifies the relation between the present paper and the earlier
general-metric result~\cite{EmmerichPD26}. The metric-space argument is
conceptually simpler because it is based on an exact two-distance
construction. The Euclidean-plane result is stronger, but it requires a
different geometric realization principle. In this sense, the present
paper should be read as a refinement of the earlier hardness proof,
rather than as a mere reformulation of it.

The appendices complete the proof in two directions. Appendix~A
supplies the quantitative estimates required to realize the construction
for rational point sets with polynomial encoding length, thereby
establishing polynomial-time reducibility in the standard Turing model.
Appendix~B contains a few small numerical examples whose role is purely
illustrative: they are not used logically in the proof, but they make
the separation mechanism visible on concrete instances.

\appendix

\section{Quantitative separation for rational inputs}
\label{app:rational}

This appendix supplies the complexity-theoretic bookkeeping omitted from
the main proof. It shows that, for rationally encoded point sets, the
parameters $\delta$, $\eta$, and the scale factor $L$ can be chosen
quantitatively with polynomial encoding length.

\begin{lemma}[Threshold margin for rational coordinates]
\label{lem:bitgap}
Let $P\subseteq \mathbb{Q}^2$ be finite, and assume every numerator and
denominator appearing in the coordinates of points of $P$ has bit-length
at most $B\ge 1$. Set
\[
\varepsilon:=2^{-12B}.
\]
Then:
\begin{enumerate}
\item for all distinct $p,q\in P$, one has $\|p-q\|_2\ge \varepsilon$;
\item if $\|p-q\|_2>1$, then in fact $\|p-q\|_2\ge 1+\varepsilon$.
\end{enumerate}
\end{lemma}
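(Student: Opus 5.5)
The plan is to work with squared distances, which are rational, and bound how close a rational with controlled denominator can be to $0$ or to $1$. Write $p=(a_1/b_1,a_2/b_2)$ and $q=(c_1/d_1,c_2/d_2)$, where all numerators and denominators have absolute value below $2^B$. Then
\[
\|p-q\|_2^2=\Bigl(\frac{a_1d_1-c_1b_1}{b_1d_1}\Bigr)^2+\Bigl(\frac{a_2d_2-c_2b_2}{b_2d_2}\Bigr)^2=\frac{N}{M},
\]
with common denominator $M=(b_1d_1b_2d_2)^2$. This gives $0<M<2^{8B}$, and $N$ is a nonnegative integer.

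\emph{Item 1.} If $p\neq q$, then $N\ge 1$, so
\[
\|p-q\|_2^2\ge \frac{1}{M}>2^{-8B},
\]
hence $\|p-q\|_2>2^{-4B}\ge \varepsilon$.

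\emph{Item 2.} If $\|p-q\|_2>1$, then $N/M>1$, and since both are integers, $N\ge M+1$. Therefore
\[
\|p-q\|_2^2\ge 1+\frac1M>1+2^{-8B}.
\]
To pass from squared distance to distance, I will use $\sqrt{1+t}\ge 1+t/3$ for $t\in[0,1]$ (valid because $(1+t/3)^2=1+2t/3+t^2/9\le 1+t$). This yields
\[
\|p-q\|_2\ge 1+\tfrac13\,2^{-8B}\ge 1+2^{-10B}\ge 1+\varepsilon,
\]
where the middle inequality uses $B\ge1$. The exponent $12B$ leaves slack, so I will not optimize the constants.

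The only delicate point is the bookkeeping of bit-lengths: a bit-length of at most $B$ means absolute value below $2^B$, so each product of four denominators is below $2^{4B}$ and its square below $2^{8B}$. Signs of the denominators are harmless since $M$ is a square. I expect no real obstacle beyond stating this convention clearly. The conclusion then feeds into the main proof with $\delta=\eta=\varepsilon$, giving a scale factor $L$ of polynomial encoding length.
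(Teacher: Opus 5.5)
Your proof is correct and follows the paper's argument: the squared distance is a rational whose denominator is below $2^{8B}$, so it is bounded away from $0$ and, when it exceeds $1$, bounded away from $1$. The only deviation is in item 2, where you turn $\sigma-1>2^{-8B}$ into a margin for $\|p-q\|_2$ via monotonicity of the square root and $\sqrt{1+t}\ge 1+t/3$; the paper instead uses $\|p-q\|_2-1=(\sigma-1)/(\|p-q\|_2+1)$ with the upper bound $\|p-q\|_2\le 2^{B+2}$, so your variant never needs to bound the numerators and gives the slightly sharper margin $\tfrac13\, 2^{-8B}$.
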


\begin{proof}
Write
\[
p=\left(\frac{a_1}{b_1},\frac{c_1}{d_1}\right),
\qquad
q=\left(\frac{a_2}{b_2},\frac{c_2}{d_2}\right),
\]
with all numerators and denominators bounded in absolute value by $2^B$.
Each coordinate difference is then a rational with denominator at most
$2^{2B}$, so each squared coordinate difference has denominator at most
$2^{4B}$. Therefore the squared distance
\[
\sigma:=\|p-q\|_2^2
\]
is a rational with denominator at most $2^{8B}$.

If $p\neq q$, then $\sigma>0$, hence $\sigma\ge 2^{-8B}$ and therefore
\[
\|p-q\|_2\ge 2^{-4B}\ge \varepsilon.
\]
This proves (1).

Now assume $\|p-q\|_2>1$, i.e. $\sigma>1$. Since $\sigma$ has
denominator at most $2^{8B}$, we have
\[
\sigma-1\ge 2^{-8B}.
\]
Also each coordinate has absolute value at most $2^B$, hence
\[
\|p-q\|_2\le 2^{B+2},
\qquad
\|p-q\|_2+1\le 2^{B+3}.
\]
Using
\[
\|p-q\|_2-1=\frac{\sigma-1}{\|p-q\|_2+1},
\]
we obtain
\[
\|p-q\|_2-1\ge 2^{-(9B+3)}\ge 2^{-12B}=\varepsilon.
\]
This proves (2).
\end{proof}

\begin{corollary}[Polynomially computable scaling]
\label{cor:poly-scale}
Let $(P,k)$ be a rationally encoded instance of geometric unit-disk
independent set, and let $B$ be the maximum coordinate bit-length.
Define
\[
\varepsilon:=2^{-12B},
\qquad
M:=\max\{4k,\;k(k-1)+1\},
\qquad
c_{\theta_0}:=\left\lceil\frac{\ln 2}{\theta_0}\right\rceil,
\]
and let
\[
L:=c_{\theta_0}\,2^{12B}\,\lceil \log_2 M\rceil.
\]
Then $L$ is an integer of polynomial encoding length, and it satisfies
\[
e^{-\theta_0L\varepsilon}\le \frac1M.
\]
In particular,
\[
e^{-\theta_0L\varepsilon}\le \frac1{4k}
\qquad\text{and}\qquad
e^{\theta_0L\varepsilon}\ge M>k(k-1).
\]
\end{corollary}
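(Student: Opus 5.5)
The plan is to verify the three claims in turn: integrality and polynomial encoding length of $L$, the inequality $e^{-\theta_0L\varepsilon}\le 1/M$, and then the two consequences. For the first claim, $c_{\theta_0}$ is a fixed positive integer depending only on the constant $\theta_0$. The factor $2^{12B}$ has encoding length $O(B)$ bits, and $\lceil\log_2 M\rceil=O(\log k)$. Hence $L$ is a product of three positive integers whose binary length is $O(B+\log\log k)+O(1)$, which is polynomial in the input size. Since $k\le n$, $k$ is itself bounded by the input size.

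For the main inequality, I would first note that $\varepsilon L=c_{\theta_0}\lceil\log_2 M\rceil$, because the powers of $2$ cancel. Therefore
\[
\theta_0L\varepsilon=\theta_0c_{\theta_0}\lceil\log_2 M\rceil\ge \ln 2\cdot\lceil\log_2 M\rceil\ge \ln 2\cdot\log_2 M=\ln M,
\]
where the first inequality uses $\theta_0c_{\theta_0}\ge\theta_0\cdot\frac{\ln 2}{\theta_0}=\ln 2$. Exponentiating gives $e^{\theta_0L\varepsilon}\ge M$, which is equivalent to $e^{-\theta_0L\varepsilon}\le 1/M$.

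The two consequences then follow directly from the definition of $M$. Since $M\ge 4k$, we get $e^{-\theta_0L\varepsilon}\le 1/M\le 1/(4k)$. Since $M\ge k(k-1)+1>k(k-1)$, we get $e^{\theta_0L\varepsilon}\ge M>k(k-1)$.

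There is no real obstacle here; the only care point is the cancellation $\varepsilon\cdot 2^{12B}=1$ together with the rounding $\lceil\cdot\rceil$ working in the right direction. I would also remark how this plugs into the main proof. By Lemma~\ref{lem:bitgap} one may take $\delta=\eta=\varepsilon$. The scale conditions $e^{-\theta_0L\delta}\le 1/(4k)$ and $e^{-\theta_0L\eta}<1/(k(k-1))$ are then exactly the two displayed consequences, so the reduction outputs the rational points $Lp_i$ with polynomial encoding length.
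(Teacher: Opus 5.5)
Your proposal is correct and follows essentially the same route as the paper. Both use the cancellation $\varepsilon\cdot 2^{12B}=1$, then $\theta_0 c_{\theta_0}\ge\ln 2$ and $\lceil\log_2 M\rceil\ge\log_2 M$ to get $\theta_0L\varepsilon\ge\ln M$, and both read the encoding-length claim off the factorization of $L$ into a $\theta_0$-dependent constant, $2^{12B}$, and an $O(\log k)$ integer; your $O(B+\log\log k)$ bit count is slightly sharper than the paper's polynomial-in-$B$-and-$\log k$ bound, but this changes nothing.
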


\begin{proof}
By construction,
\[
\theta_0L\varepsilon
=
\theta_0\,c_{\theta_0}\,\lceil\log_2 M\rceil
\ge
(\ln 2)\,\lceil\log_2 M\rceil
\ge
\ln M.
\]
Exponentiating yields
\[
e^{-\theta_0L\varepsilon}\le \frac1M.
\]
The bit-length of $L$ is polynomial in $B$ and $\log k$ because
$\theta_0$ is fixed and $L$ is the product of a constant depending only
on $\theta_0$, a power of $2$ of exponent $12B$, and
$\lceil\log_2 M\rceil=O(\log k)$.
\end{proof}

\begin{remark}
In the notation of the main proof, Lemma~\ref{lem:bitgap} shows that one
may take
\[
\delta\ge \varepsilon
\qquad\text{and}\qquad
\eta\ge \varepsilon.
\]
Therefore the single choice of $L$ in
Corollary~\ref{cor:poly-scale} simultaneously guarantees both
requirements used in the proof of Theorem~\ref{thm:main}:
\[
e^{-\theta_0L\delta}\le \frac1{4k}
\qquad\text{and}\qquad
e^{-\theta_0L\eta}<\frac1{k(k-1)}.
\]
This is exactly what is needed to turn the geometric argument from the
main text into a polynomial-time reduction.
\end{remark}

\section{Numerical example}

As a small numerical sanity check, we evaluate the
Solow--Polasky objective directly for a few small Euclidean point sets
using
\[
SP_{\theta_0}(S)=\mathbf{1}^{\top}Z^{-1}\mathbf{1},
\qquad
Z_{ij}=e^{-\theta_0\|x_i-x_j\|_2},
\]
with $\theta_0=1$ and scale factor $L=3$.
The purpose of this appendix is not to furnish a proof, but to make the
mechanism of the reduction more explicit on a concrete instance.

\medskip
\noindent
\textbf{A worked example.}
Consider the geometric unit-disk instance
\[
P=\{p_1,p_2,p_3\}=\left\{\left(0,0\right),\left(1,0\right),\left(0,\tfrac34\right)\right\},
\qquad k=2.
\]

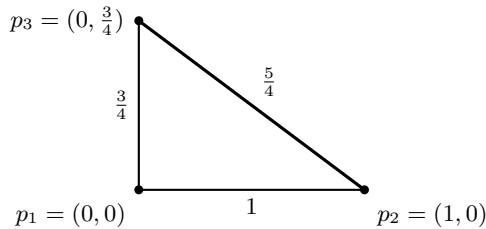
\begin{figure}[h!]
\centering
\begin{tikzpicture}[scale=3.0,
    every node/.style={font=\small},
    pt/.style={circle, fill=black, inner sep=1.2pt}
]

\coordinate (p1) at (0,0);
\coordinate (p2) at (1,0);
\coordinate (p3) at (0,0.75);

\draw[thick] (p1) -- (p2);
\draw[thick] (p1) -- (p3);

\draw[very thick] (p2) -- (p3);

\node[pt,label=below left:{$p_1=(0,0)$}] at (p1) {};
\node[pt,label=below right:{$p_2=(1,0)$}] at (p2) {};
\node[pt,label=left:{$p_3=(0,\tfrac34)$}] at (p3) {};

\node[below] at ($(p1)!0.5!(p2)$) {$1$};
\node[left]  at ($(p1)!0.5!(p3)$) {$\tfrac34$};
\node[above right] at ($(p2)!0.5!(p3)$) {$\tfrac54$};

\end{tikzpicture}
\caption{The three-point instance used in Appendix~B. The highlighted edge
corresponds to the unique independent pair.}
\label{fig:triangle-example}
\end{figure}
These three points form a nondegenerate triangle with one short side,
one side exactly at the threshold, and one long side. More precisely,
\[
\|p_1-p_3\|_2=\tfrac34<1,
\qquad
\|p_1-p_2\|_2=1,
\qquad
\|p_2-p_3\|_2=\sqrt{1+\tfrac{9}{16}}=\tfrac54>1.
\]
Hence the unique independent $2$-subset in the source instance is
$\{p_2,p_3\}$.

The reduction scales the point set by $L=3$ and produces
\[
X=LP=\left\{x_1,x_2,x_3\right\}=\left\{\left(0,0\right),\left(3,0\right),\left(0,\tfrac94\right)\right\}.
\]
The corresponding pairwise distances in the image are
\[
\|x_1-x_3\|_2=\tfrac94,
\qquad
\|x_1-x_2\|_2=3,
\qquad
\|x_2-x_3\|_2=\tfrac{15}{4}.
\]
For a two-point set whose distance is $d$, the similarity matrix is
\[
Z(d)=\begin{pmatrix}1 & e^{-d}\\ e^{-d} & 1\end{pmatrix},
\]
and a direct calculation gives
\[
Z(d)^{-1}=\frac{1}{1-e^{-2d}}
\begin{pmatrix}1 & -e^{-d}\\ -e^{-d} & 1\end{pmatrix},
\qquad
SP_{1}(d)=\mathbf{1}^{\top}Z(d)^{-1}\mathbf{1}=\frac{2}{1+e^{-d}}.
\]
Thus the three feasible $2$-subsets of $X$ have values
\[
SP_{1}(\{x_1,x_3\})=\frac{2}{1+e^{-9/4}}\approx 1.809301,
\]
\[
SP_{1}(\{x_1,x_2\})=\frac{2}{1+e^{-3}}\approx 1.905148,
\]
and
\[
SP_{1}(\{x_2,x_3\})=\frac{2}{1+e^{-15/4}}\approx 1.954045.
\]
Since $SP_{1}(d)$ is increasing in the distance $d$, the maximizing pair
is the one corresponding to the long side of the triangle, namely
$\{x_2,x_3\}$. This is exactly the image under the reduction of the
unique independent $2$-subset $\{p_2,p_3\}$ of the source instance.
Thus, on this concrete example, the reduction selects the long side of
the triangle and excludes the short side and the threshold side, in
complete agreement with the proof of Theorem~\ref{thm:main}.

The worked example already illustrates the mechanism relevant for the
proof: after scaling, the independent pair corresponds to the largest
distance and therefore to the smallest off-diagonal similarity, which in
turn yields the largest value of the Solow--Polasky objective. For the
purposes of the present paper, this explicit three-point calculation is
more informative than a collection of further numerical examples and
suffices as a concrete sanity check for the reduction.

\end{document}